\newcommand{\C}{\mathcal{C} }
\newcommand{\A}{\mathcal{A}}
\newcommand{\B}{\mathcal{B}}
\newcommand{\Zp}{\mathbb{Z}_p}
\newcommand{\G}{\mathbb{G}_1}
\newcommand{\Gg}{\mathbb{G}_2}
\newcommand{\Gt}{\mathbb{G}_T}
\newcommand{\egg}{e(g_1,g_2)}
\newcommand{\Fb}{ \mathbb{F}_{2}}
\begin{document}


\title{Public Ledger for Sensitive Data}

\author{Riccardo Longo \and Massimiliano Sala}
\institute{Department of Mathematics, University of Trento, via Sommarive, 14 - 38123 Povo (Trento), Italy\\ \email{riccardolongomath@gmail.com, maxsalacodes@gmail.com}}

\maketitle

\begin{abstract}
Satoshi Nakamoto's Blockchain allows to build publicly verifiable and almost immutable ledgers, but sometimes privacy has to be factored in.

In this work an original protocol is presented that allows sensitive data to be stored on a ledger where its integrity may be publicly verified, but its privacy is preserved and owners can tightly manage the sharing of their information with efficient revocation.
\end{abstract}

\paragraph*{Keywords}
Privacy-Preserving Ledger,
Blockchain,
End to End Encryption,
Sensitive Data,
Revocation,
Consent,
Cloud,
Bilinear Groups.




\section{Introduction}
In the digital era information sharing is deeply embedded in practically every process: it is difficult to achieve anything without having to share some personal data.
Thankfully modern legislation (e.g. Europe's GDPR) is catching up with the privacy concerns that this kind of workflow fosters, and it is becoming mandatory to allow users to manage how their data is shared.
In particular the \emph{right to be forgotten} allows users to request the deletion of their data from one service's database, so they can revoke the consent to access and handle their informations.

This also means that a platform that manages the sharing of personal data must have an efficient revocation system that allows granular management of users' consent.
Ideally users should be able to grant \emph{one-time access}, that in practice means limited in scope: access is granted as long as it is really needed for the fulfilment of the service.
This property essentially requires automatic revocation.
Should be noted that this approach follows the best practices of \emph{need-to-know} and \emph{principle of least privilege}, especially when paired with \emph{end-to-end} encryption.

Public ledgers based on blockchains \cite{bitcoin} are a great solution for storing public information and they guarantee immutability and accountability.
However serious problems may arise when they are used to store sensitive data, such as health records.
In fact this kind of data is inherently private, but often needs to be shared between multiple service providers (such as healthcare insurance companies, hospitals, pharmacies), possibly with warranties that this data is legit and not been tampered with.

Research on distributed storage has often focused on secure protocols, see e.g. \cite{garay1997secure}, however the advent of GDPR has started a quest for applications with strong privacy properties suitable for sensitive data.
For example there are interesting works that exploit blockchain technologies to enhance access control \cite{azaria2016medrec,zyskind2015decentralizing}.
The work of Kosba et al. \cite{kosba2016hawk} provides a solution to the problem of keeping a smart contract private.
The work of Groth \cite{groth2010short} shows how private data can be encrypted and posted on a blockchain and later prove some properties about it.
\newline

We propose an original protocol that enables the construction of a public ledger that securely and privately stores sensitive information with a blockchain infrastructure.
The intent is to allow secure sharing of this data between authorized parties, integrating an \emph{end-to-end} and \emph{one-time access} system that provides full control over the usage of private data, all while exploiting the accountability and decentralisation of a blockchain, either pre-existing or purposely built.

\subsection{Intuition}\label{intuition}
The work presented here is based on results included in the first author's PhD thesis \cite{longo2018formal}.\\
The approach used to achieve a one-time access system is based on the fact that when a party gains access to a cleartext, it has the opportunity to copy the data and store it locally.
If this is the case, it is worthless to negate further access to that cleartext, since that party already owns a copy.
However it is often expensive to maintain a local copy of the data, so the goal becomes to avoid further access to the data \emph{unless} a local copy has been made.
More precisely the aim is to guarantee that no party has access to information with size smaller than the entire cleartext, that can grant further access to the data (e.g. a key).

So the idea is to encrypt the cleartext with a sort of one-time-pad scheme, in this way the key has at least the same size of the cleartext, so storing it does not give any advantage over storing the cleartext directly.
On the contrary, while plaintext documents are usually reasonably compressible, cryptographic keys are not, given their high entropy.

To achieve practicality the \emph{pad} is stored encrypted on the ledger, so a smaller key is sufficient to recover the original data.
To maintain the one-time property the pad is periodically re-encrypted so a new key is needed to access the data again.
The advantage over a classical \emph{storage re-encryption} is that the protocol combines a shared encrypted pad with individual \emph{per-file keys} to generate the actual encryption pad, so that it is sufficient to re-encrypt the shared pad to effectively revoke access to all data.
Thus the revocation is almost constant-time, without affecting the privacy.

The scenario is the following: a private user $ U $ wants to store its private data safely on the cloud, with the option to share that information with a service provider $ P $ for some time, and with the guarantee that an independent third party could audit the integrity of the data; this cloud is also called \emph{public ledger}.
To coordinate the storage and in particular the revocation mechanism there is a file keeper $ F $ that is not properly a \emph{trusted third party} since it is not responsible for the confidentiality (but is in charge of a correct revocation), moreover the party acting in this role could change at every revocation interval and its behaviour can be publicly audited.

After the initialisation of the public ledger, $ U $ sets up its keys (public and private) according to the parameters of the ledger.
The ledger is divided in a \emph{constant part} where the encrypted data resides, available for sharing and public auditing, and a \emph{variable part} used to coordinate encryption, decryption and revocation; $ F $ is in charge of updating and maintaining the variable portion of the ledger.
To optimise storage costs encrypted data could be actually stored on a cheaper location suitable for remote access and sharing (such as a cloud or distributed file system) and preserve on the public ledger only some metadata used to publicly audit the integrity of the ciphertexts and the File Keeper's behaviour.
Note that conventional blockchain ledgers are immutable (in the sense of\emph{ append-only write access}), so the concept of a \emph{variable ledger} may seem a contradiction in terms.
However we only require this portion of the ledger to be publicly readable by all parties, while it is maintained (i.e. writeable) by $F$ only, so it is more akin to a traditional website and could easily be conceived and deployed as a separate entity from the \emph{constant ledger}.

To store information safely on the ledger, $ U $ needs an \emph{encryption token} given by $ F $, that is used to synchronise with the updating part of the ledger.
The token is combined with the user's private key, a session key, and additional parameters available on the variable portion of the ledger, to encrypt $ U $'s personal data, that is then safely stored on the constant part of the ledger alongside additional information used to publicly audit the integrity of the data and the goodness of $ F $'s behaviour in keeping the variable ledger.
To offload key management from the user the session key is encapsulated (i.e. safely masked) using $ U $'s private key and stored in the variable part of the ledger.

To share data with the service provider $ P $, the user $ U $ retrieves the relevant encapsulated key, unlocks it using its own private key and shares it with $ P $.
This \emph{unlocked} key, combined with additional parameters available on the variable ledger, can be used to successfully decrypt the data as stored on the static part of the ledger.
When the ledger is updated by $ F $ this key is automatically revoked, because the decryption depends on the combination of this key with the parameters of the variable ledger, that have changed with the update.
As aforementioned this means that with a single update every key is revoked, but users can easily re-grant access to any document by unlocking the relevant encapsulated key (that has also been updated by $F$).

Moreover since the data has not been re-encrypted in a conventional way, its auditing is straightforward: it is a classic verification of static (immutable) data, which is impossible to achieve with a conventional re-encryption.
\newline

The construction of the protocol is inspired from one-time-pad encryption, public-key encryption schemes based on bilinear pairings, and the Diffie-Hellman key exchange. 
The novelty of the approach resides in the modelling of the scenario of tightly-managed information sharing alongside its security requirements, and the actual construction of the protocol, that carefully combines the underlying primitives to achieve the desired properties.
In particular to the best of our knowledge this is the first construction with the one-time-access property and fast revocation, combined with full support of public auditing.


\section{Requirements and assumptions}
This section covers background information necessary to understand the protocol presented in this work and its security.

In particular, some mathematical notions about bilinear groups are given, alongside the cryptographic assumptions that will be used.
 \subsection{Bilinear Maps}\label{pbg}
Pairing-based cryptography exploits the properties of bilinear pairings to add new functionalities to encryption schemes that are difficult or impossible to achieve with classical primitives.
Bilinear groups are the main environment for this type of cryptography.
They are usually implemented with the group of points of an elliptic curve over a finite field, while for the pairing the most common choices are the Tate and Weil pairings, in their modified version so that $e(g_1,g_2) \neq 1$.
For a detailed analysis of these groups, the curves to use in an implementation, and bilinear pairings in general see~\cite{lynn2007implementation,costello2012pairings}.
The most used and studied types of bilinear groups are the ones with prime order $p$.

\begin{definition}[Pairing]
  Let $\G, \Gg, \Gt$ be groups of the same prime order $p$.
  A pairing is a bilinear map $e$ such that $e: \G \times \Gg \rightarrow \Gt$ has the following properties:
  \begin{itemize}
    \item Bilinearity: $\forall g \in \G, h \in \Gg, \forall a, b \in \mathbb{Z}_p, \quad e(g^a, h^b) = e(g, h)^{ab}$.
    \item Non-degeneracy: for $g_1, g_2$ generators of $\G, \Gg$ respectively, $e(g_1, g_2)\neq 1_{\Gt}$.
  \end{itemize}
\end{definition}

\begin{definition}[Bilinear Groups]
  $\G, \Gg, \Gt$ are \emph{Bilinear Groups} if the conditions above hold and the group operations in $\G, \Gg$, and $\Gt$ as well as the bilinear map $e$ are efficiently computable.
  $\Gt$ is also called the \emph{target group}.
\end{definition}
In the remainder of this section $\G, \Gg$ and $\Gt$ are understood.

\subsubsection{Mapping of the target group}
In the protocol we use elements of the target group $\Gt$ as the source from which we derive the bitstrings that are xored with the plaintext to encrypt it.
The goal is to have a \emph{one-time-pad}-like encryption, therefore we need these bitstrings of length $\delta$ to be uniformly distributed in $\Fb^\delta$ when they are derived from elements of $\Gt$ chosen uniformly at random.

\begin{definition}[Uniform Mapping]
  Let $\Gt$ be a target group of order $p$.
  A map $\phi: \Gt \rightarrow \Fb^\delta$ is a \emph{uniform mapping} of $\Gt$ of size $\delta$ if it is efficiently computable and there is no probabilistic polynomial-time algorithm $\B: \Fb^\delta \rightarrow \{0, 1\}$ has more than negligible advantage:
  \begin{equation}
    Adv_{\B} = \Big|\text{Pr}\left[\B(\phi(g)) = 1\right] - \text{Pr}\left[\B(s)= 1\right]\Big|
  \end{equation}
  when $g$ is chosen uniformly at random in $\Gt$ and $s$ is chosen uniformly at random in $\Fb^\delta$.
\end{definition}

Depending on the group and its size, there could be encoding of its elements that are actually uniformly distributed in the bitstring space, however in practical implementations a cryptographic hash function could be used as a suitable uniform mapping.

\subsection{Decisional Bilinear Diffie-Hellman Assumption}
The Decisional Bilinear Diffie-Hellman (BDH) assumption is the basilar assumption used for proofs of indistinguishability in pairing-based cryptography.
It has been first introduced in \cite{boneh2001identity} by Boneh and Franklin in the case of symmetric pairings (where $\G = \Gg$) and then widely used in a variety of proofs, including the one of the first concrete ABE scheme in \cite{goyal2006attribute}.
Here we use a definition specific for asymmetric pairings, introduced in~\cite{boyen2008uber}.
We use an asymmetric pairing because in practical implementations they guarantee the highest level of security for a given computational cost (see e.g.~\cite{barreto2004efficient,uzunkol2018still,chatterjee2011cryptographic})
\\

Let $\alpha, \beta, \gamma, z \in \mathbb{Z}_p$ be chosen at random and $g_1, g_2$ be generators of the bilinear groups $\G, \Gg$ respectively.
The decisional Bilinear Diffie-Hellman (BDH) problem consists in constructing an algorithm
$$
  \B(A = g_1^\alpha, B=g_2^\beta, C=g_2^\gamma, T) \rightarrow \{0,1\}
$$
to efficiently distinguish between the tuples $(A,B,C,e(g_1,g_2)^{\alpha \beta \gamma})$ and \linebreak[4] $(A,B,C,e(g_1,g_2)^{z})$, outputting respectively 1 and 0.
The advantage of $\mathcal{B}$ in this case is clearly written as:
$$
Adv_{\B} = \Big|\text{Pr}\left[\B(A, B, C, \egg^{\alpha \beta \gamma}) = 1\right] - \text{Pr}\left[\B(A, B, C, \egg^z )= 1\right]\Big|
$$

\noindent where the probability is taken over the random choice of the generators $g_1, g_2$, of $\alpha, \beta, \gamma, z$ in $\mathbb{Z}_p$, and the random bits possibly consumed by $\B$ to compute the response.

\begin{definition}[BDH Assumption] \label{BDH}
  The decisional BDH assumption holds if no probabilistic polynomial-time algorithm $\B$ has a non-negligible advantage in solving the decisional BDH problem.
\end{definition}

\subsection{Interactive Diffie-Hellman}
Interactive assumptions are usually stronger than their static counterparts, since the solver has more control over the parameters.
This unfortunately means that they give weaker security, but sometimes it is nearly impossible to reduce highly interactive protocols to static assumptions, so interactive assumptions are needed.

\subsubsection{Interactive Decisional Diffie-Hellman Assumption}
Let $\C$ be a challenger that chooses $\alpha, \beta, z \in \mathbb{Z}_p$ at random and $g$ be a generator of a group $\mathbb{G}$ of prime order $p$.
The \emph{Interactive Decisional Diffie-Hellman} (IBDDH) problem consists in constructing an algorithm $\B(\C) \rightarrow \{0,1\}$ that interacts with the challenger in the following way:
\begin{itemize}
  \item $\C$ gives to $\B$ the values $A=g^\alpha, B=g^\beta$;
  \item $\B$ chooses an exponent $0 \neq s \in \Zp$ and sends to the challenger the value $S=B^{\frac{1}{s}}$;
  \item $\C$ flips a random coin $r \in \{0, 1\}$ and answers with $Z=S^\alpha = g^{\frac{\alpha \beta}{s}}$ if $r=0$, $Z=g^z$ if $r=1$;
  \item $\B$, given $A, B, S, Z$, outputs a guess $r'$ of $r$.
\end{itemize}

The advantage of $\mathcal{B}$ in this case is clearly written as:
\[
Adv_{\B} = \Big|\text{Pr}\left[\B(A, B, S, g^{\frac{\alpha \beta}{s}}) = 0\right] - \text{Pr}\left[\B(A, B, S, g^z )= 0\right]\Big|
\]
where the probability is taken over the random choice of the generator $g$, of $\alpha, \beta, s, z$ in $\mathbb{Z}_p$, of $r\in \{0,1\}$, and the random bits possibly consumed by $\B$ to compute the response.

\begin{definition}[IDDH Assumption] \label{IDDH}
  The Interactive Decisional DH assumption holds if no probabilistic polynomial-time algorithm $\B$ has a non-negligible advantage in solving the decisional IDDH problem.
\end{definition}

Note that an adversary that can solve the IDDH problem can solve the DH problem simulating a IDDH problem and choosing $s=1$, but the converse is not true since it is not possible to adapt the DH challenge without knowing $s$.


\section{Protocol Specifications}
Following the intuition given in \Cref{intuition} we present now the proposed construction.
We formally define the various actors and interactions that constitute the protocol, and the data structure that we rely on.
In the subsequent section the security will be formally and rigorously assessed, modelling some attack scenarios and proving their infeasibility.

\subsection{Masking Shards Protocol}\label{shards}
In this protocol multiple users $U_l$ publish encrypted data on a public ledger maintained by a file keeper $F$.
To gain access to the encrypted data any service provider $P$ has to ask directly to the user for a decryption key to use in combination with some masking shards that are published on the ledger.
The file keeper periodically updates the masking shards, so that older decryption keys become useless.

The ledger also has a constant section where encrypted data is actually stored and the integrity is guaranteed via chains of hash digests.

The construction of the scheme uses finite groups $\Zp$ whose size $p$ depends on a security parameter.
Throughout the following definition it is assumed that every  \emph{exponent} $x \in \Zp$ randomly generated is neither $1$ nor $0$.

\begin{definition}[Updating Masking Shards Protocol]
  An \emph{Updating Masking Shards Protocol} for a file keeper $F$, a set of users $\{U_l\}_{1 \leq l \leq N}$ and a service provider $P$ proceeds along the following steps:
  \begin{itemize}
    \normalfont
    \item $F$ sets up the public ledger: bilinear groups $\G, \Gg$ of prime order $p$ are chosen according to a security parameter $\kappa$, along with generators $g_1 \in \G, g_2 \in \Gg$.
    Let $e$ be the pairing and $\Gt$ be the target group of the same order $p$, with uniform mapping $\phi$ of size $\delta$.
    $F$ chooses uniformly at random exponents $u_i \in \Zp$ for $1 \leq i \leq I$ where $I$ is the maximum number of shards in a data block, determined from the desired data block length $|B|$ by the formula $I = |B| / \delta$.
    Finally $F$ chooses a random time-key $s_{t_0} \in \Zp$ and publishes the initial masking shards:
    \begin{equation}
      \varepsilon_{i, t_0} = g_1^{u_i s_{t_0}} \qquad 1 \leq i \leq I.
    \end{equation}
    $F$ securely saves the value $s_{t_0}$ but can forget the exponents $u_i$.
    \item $F$ periodically updates the shards choosing a new time-key $s_{t_{j+1}} \in \Zp$ and computing
    \begin{align}
      \varepsilon_{i, t_{j+1}} = (\varepsilon_{i, t_{j}})^{\frac{s_{t_{j+1}}}{s_{t_{j}}}} \qquad 1 \leq i \leq I.
    \end{align}
    \item Each user $U_l$ chooses two private exponents $\mu_l, v_l \in \Zp$, the first is immediately used to build and publish its public key:
    \begin{equation}
      q_l = g_2^{\mu_l}.
    \end{equation}
    \item To publish an encrypted file on the ledger at a time $t_j$, a user $U_l$ requests an encryption token.
    $F$ takes the public key $q_l$ of the user and computes:
    \begin{align}
      k_{l, 0, t_j} &= q_l^{\frac{1}{s_{t_j}}} \\
      &= g_2^{\frac{\mu_l}{s_{t_j}}}. \nonumber
    \end{align}
    \item Let $b - 1$ be the index of the last block in the ledger, thus the file will be published in the $b$-th block.
    Let $m_b$ be the message (file) that $U_l$ wants to encrypt, and $I_b \delta$ its length.
    Then $U_l$ divides the message in pieces $ m_{b, i} $ of equal length $\delta$, and chooses a random exponent $k_b \in \Zp$ to compute the encrypted shards as:
    \begin{align}
      c_{b, i} &= m_{b, i} \oplus \phi\left(e(\varepsilon_{i, t_{j}}, (k_{l, 0, t_j})^{k_b})\right) \qquad 1 \leq i \leq I_b \\
      &= m_{b, i} \oplus \phi\left( e(g_1^{u_i s_{t_j}}, g_2^{\frac{k_b \mu_l}{s_{t_j}}})\right) \nonumber \\
      &= m_{b, i} \oplus  \phi\left(\egg^{u_i k_b \mu_l}\right). \nonumber
    \end{align}
    If the message length is not a multiple of $\delta$ then it is sufficient to truncate the last pad $\phi\left(\egg^{u_{I_b} k_b \mu_l}\right)$ to match the length of the last piece $m_{b, I_b}$, so that the final ciphertext $c_b = c_{b,1}||\ldots||c_{b, I_b}$ has the lame length as the plaintext $m_b$.
    \item In addition to the encrypted shards, $U_l$ computes the encapsulated key using its second secret exponent:
    \begin{align}
    k_{b, 1, t_j} &= (k_{l, 0, t_j})^{\frac{v_l k_b}{\mu_l}} \\
    &= \left(g_2^{\frac{\mu_l}{s_{t_j}}}\right)^{\frac{v_l k_b}{\mu_l}} \nonumber \\
    &= g_2^{\frac{v_l k_b}{s_{t_j}}}, \nonumber
    \end{align}
    $U_l$ can forget the exponent $k_b$ once this key has been computed.
    \item Let $\bar{i} \equiv b \mod I$.
    Then $F$ computes the control shard:
    \begin{align}
    c_{b} &= \phi\left(e(\varepsilon_{\bar{i}, t_{j}}, k_{b, 1, t_j})\right) \\
    &= \phi\left(e(g_1^{u_{\bar{i}} s_{t_j}}, g_2^{\frac{k_b v_l}{s_{t_j}}})\right) \nonumber \\
    &= \phi\left(\egg^{u_{\bar{i}} k_b v_l}\right). \nonumber
    \end{align}
    \item $U_l$ sends to $F$ the digest $h(m_b)$ of the message through a secure hash function $h$, the ciphertext $c_b$, the control shard $c^*_b$ and the encapsulated key $k_{b, 1, t_j}$.
    $F$ inserts into the next data block of the \emph{static public chain} $c_b, h(m_b), c^*_b$; and inserts in the \emph{updating ledger} the encapsulated key $k_{b, 1, t_j}$.
    At this point the data block is completed with a \emph{hash-link} to the previous block and an integrity warranty that involves the hash digest of its contents (see~\Cref{block structure}).
    \item Once that at least one file has been published, $F$ has to periodically update not only the masking shards but also the encapsulated keys.
    The key update is similar to the shard update:
    \begin{align}
      k_{b, 1, t_{j+1}} &= (k_{b, 1, t_{j}})^{\frac{s_{t_{j}}}{s_{t_{j+1}}}} \\
      &= \left(g_2^{\frac{v_l k_b}{s_{t_{j}}}}\right)^{\frac{s_{t_{j}}}{s_{t_{j+1}}}} \nonumber \\
      &= g_2^{\frac{v_l k_b}{s_{t_{j+1}}}}. \nonumber
    \end{align}
    Where $s_{t_{j+1}} \in \Zp$ is the same time-key used to update the shards.
    Note that $s_{t_j}$ could and should be forgotten once the update has been completed.
    \item Let $P$ be a service provider that needs access to the file $m_b$, and therefore asks for permission to the owner of the file $U_l$.
    To grant a one-time permission (or better permission until the next update) $U_l$ computes an unlocked key valid for the current time $t_j$.
    $U_l$ retrieves from the updating ledger the encapsulated key $k_{b, 1, t_j}$ and computes:
    \begin{align}
      k_{b, 2, t_j} &= (k_{b, 1, t_j})^{\frac{\mu_l}{v_l}} \\
      &= \left(g_2^{\frac{v_l k_b}{s_{t_j}}}\right)^{\frac{\mu_l}{v_l}} \nonumber \\
      &= g_2^{\frac{\mu_l k_b}{s_{t_j}}}. \nonumber
    \end{align}
    \item With the unlocked key $P$ can decrypt the encrypted shards computing:
    \begin{align}
      m_{b, i}' &= c_{b, i} \oplus \phi\left(e(\varepsilon_{i, t_{j}}, k_{b, 2, t_j})\right) \qquad 1 \leq i \leq I_b \\
      &= m_{b, i} \oplus \phi\left(\egg^{u_i k_b \mu_l}\right) \oplus \phi\left(e(g_1^{u_i s_{t_j}}, g_2^{\frac{\mu_l k_b}{s_{t_j}}})\right) \nonumber \\
      &= m_{b, i} \oplus \phi\left(\egg^{u_i k_b \mu_l}\right) \oplus \phi\left(\egg^{u_i \mu_l k_b}\right) \nonumber \\
      &= m_{b, i}. \nonumber
    \end{align}
    Afterwards $P$ can check the integrity of the decryption comparing the hash digest of the decrypted message $h(m'_b)$ with the digest included in the $b$-th block of the static chain.
\end{itemize}
\end{definition}

\subsection{Data Block structure}\label{block structure}
Encrypted data created by the users is collected and organized into data blocks.
The File Keeper maintains and publishes a variable portion of the ledger (it could be conceptually separated from the \emph{static ledger}, as discussed in~\Cref{intuition}) that contains at any time $t$:
\begin{itemize}
  \item the masked shards and their index:
  \begin{equation}
  (\varepsilon_{i, t}, i)_{1 \leq i \leq I};
  \end{equation}
  \item the encapsulated keys and the index of the data block where the corresponding encrypted pieces are stored:
  \begin{equation}
  (k_{b, 1, t}, b)_{b \geq 1}.
  \end{equation}
\end{itemize}
All these elements are kept constantly updated.

The other part of the ledger is more akin to a classical blockchain and is comprised of constant data blocks linked together.
With ``constant'' we mean that new data blocks may be added any time, but the old data blocks are never changed.
Each data block $B_b$ contains:
\begin{itemize}
  \item the ciphertext, the digests of the original cleartext, and the control shard:
  \begin{equation}
    D_b = (c_{b},h(m_{b}),c^*_{b}); \label{complete-block}
  \end{equation}
  \item the hash of the previous data block $h(B_{b-1})$;
\end{itemize}
Note that by design the index of the masking shard associated to the control shard covers the whole range $1 \leq i \leq I$.
These pieces are needed to check the integrity of the data stored in the updating part of the ledger (encapsulated keys and masking shards).
In fact let $\bar{i} = b \mod I$, then for every time $t$ it should hold:
\begin{align}
c^*_{b} &= \phi\left(e(\varepsilon_{\bar{i}, t}, k_{b, 1, t})\right) \\
&= \phi\left(e(g_1^{u_{\bar{i}} s_{t}}, g_2^{\frac{k_b v_l}{s_{t}}})\right) \nonumber \\
&= \phi\left(\egg^{u_{\bar{i}} k_b v_l}\right). \nonumber
\end{align}
Therefore any observer could check the coherence of the updating ledger (and consequently the behaviour of the file keeper $F$).

In alternative to the content defined in~\Cref{complete-block}, a more efficient blockchain may be built excluding the actual encrypted data from the data blocks, retaining only its digest.
That is the bulk of data is stored in distributed databases, while their hash is kept on the ledger to guarantee the integrity.
This approach reduces consistently the size of the data blocks, that therefore can be more widely distributed.
This shrunk data block would therefore substitute $D_b$ as defined in~\Cref{complete-block} with:
\begin{equation}
  D'_b = (h(c_{b}),h(m_{b}),c^*_{b}).
\end{equation}

The last thing required is a cryptographic warranty of immutability of the static data blocks.
If this protocol is combined with a pre-existing blockchain then the latter may be exploited to guarantee data integrity: it is sufficient to compute the digest:
\begin{equation} \label{block-hash-pled}
d_b = h(h(B_{b-1}), D_b),
\end{equation}
or $ d'_{b} = h(h(B_{b-1}), D'_b) $ with the shrunken variant, and simply embedding it in a transaction registered on the aforementioned blockchain, effectively time-stamping it (e.g. using  \texttt{OPRETURN} in Bitcoin).

Alternatively, in absence of such an anchor, the warranty has to be included in the static data blocks.
Here we propose three different approaches, that are to include either:
\begin{itemize}
  \item [$\triangleright$] the signature of the user (owner of the data) on $d_b$;
  \item [$\triangleright$] a proof of work involving $d_b$;
  \item [$\triangleright$] a signature made by a third party (or a group or multi-party signature) on $d_b$;
\end{itemize}
These solutions all have pros and cons to their adoption, the optimal choice is probably a combination of the three.

The user signature would give proof of ownership of the data, on the other hand without the signature the content remains fully anonymous to anyone besides $F$, that can act as a proxy and forward the access request of $P$ to the relevant user $U_l$ without disclosing its identity.
To guarantee the end-to-end property of the protocol $P$ should have a public key known to (or retrievable by) every user.
So when $F$ forwards a request by $P$ to $U_l$, the user can encrypt the unlocked key with this public key, so that only $P$ can use it.

The proof of work would give increasingly stronger proof of integrity to older data, in the sense that the more blocks are added to the ledger, the more infeasible it gets to manipulate older data maintaining the consistency of the chain.

The third party signature gives immediate proof of integrity of the block, in alternative or addition to the proof of work.
The trade-off between these solutions is that a signature gives instantaneous integrity evidence but relies on the honesty of the signer (possibly mitigated with multi-party signatures), whereas a proof of work is more expensive and provides sufficiently reliable security only for older blocks, but it does not rely on anyone's honesty.
\newline


\section{Security Model}\label{pled-sec}
The goals of the protocol is to achieve the following security properties:
\begin{description}
  \item [End-to-end encryption] The file keeper must not be able to read the plaintext message at any time.
  \item [One-time access] A service provider should be able to read a plaintext message at the time $t$ if and only if the file owner authorizes them with an unlocked key for the time $t$.
\end{description}


\subsection{Security against Outsiders and Service Providers}
  The security of the protocol is proven in terms of chosen-plaintext indistinguishability, the security game is formally defined as follows.\\

\begin{definition}[Security Game] \label{pledsgo}
  The security game for an updating masking protocol proceeds as follows:
  \begin{description}
    \normalfont
    \item [Init] The adversary $\A$ chooses a number of users $N$ that will encrypt files and the maximum number of masking shards $I$.
    \item [Setup] For each user $U_l$, with $1\leq l \leq N$ the challenger $\C$ sets up a public key $q_l$, and takes the role of the file keeper by publishing the initial masking shards $\varepsilon_{i, t_0}$, for $1 \leq i \leq I$.
    \item [Phase 0] The adversary may request updates of the masking shards $\varepsilon_{i, t_j}$, for ${1 \leq i \leq I}$, $0 \leq j < n$.
    \item [Commit] The users commit to a key before creating a ciphertext by publishing the encryption tokens $k_{l, 0, t_n}$ and the encapsulated keys $k_{b, 1, t_n}$.
    \item [Phase 1] The adversary for each time $n \leq j < n^*$ may request updates of the encapsulated keys $k_{b, 1, t_j}$, and either  the corresponding unlocked keys $k_{b, 2, t_j}$, or the masking shards $\varepsilon_{i, t_j}$, but not both.
    \item [Challenge] Let $\delta$ be the size of the uniform mapping $\phi$ of the target group $\Gt$.
    For each $1\leq l \leq N$ the adversary chooses two messages $m_{l,0}, m_{l, 1}$ of length $I_l \delta$ and sends them to the challenger, which flips a random coin $r_l \in \{0, 1\}$ and publishes the encryption $(c_{b, i}, 1 \leq i \leq I_l)$ of the message $m_{l, r_l}$.
    \item [Phase 2] Phase 1 is repeated for $n* \leq j < n'$.
    \item [Guess] The adversary chooses an $\bar{l}$ such that $1\leq \bar{l} \leq N$ and outputs a guess $r_{\bar{l}}'$ of $r_{\bar{l}}$.
  \end{description}
\end{definition}

\begin{definition}[Updating Masking Security]
  An Updating Masking protocol with security parameter $\xi$ is CPA secure if for all probabilistic polynomial-time adversaries $\A$, there exists a negligible function $\psi$ such that:
  \begin{equation}
    Pr[r_{\bar{l}}' = r_{\bar{l}}] \leq \frac{1}{2} + \psi(\xi)
  \end{equation}
\end{definition}

The scheme is proved secure under the BDH (Decisional Bilinear Diffie-Hellman) assumption (Definition \ref{BDH})
in the security game defined above.

The security  is provided by the following theorem.

\begin{theorem} \label{pledo}
  If an adversary can break the scheme, then a simulator can be constructed to play the decisional BDH game with non-negligible advantage.
\end{theorem}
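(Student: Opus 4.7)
The plan is a direct reduction to decisional BDH: given a challenge $(g, A = g^a, B = g^b, S = g^s, T)$, the simulator $\B$ takes the role of the challenger $\C$ in the game of \Cref{pledsgo} and embeds $A, B, S$ into the public parameters so that, on shard $i$, the blinding factor of the challenge ciphertext of a preselected target user $\bar{l}$ equals $T^{\alpha_i}$ for a simulator-chosen $\alpha_i$. If $T = e(g,g)^{abs}$ this is the genuine blinding and $\A$ wins with probability $1/2 + \epsilon$; if $T$ is random, the values $T^{\alpha_i}$ are independent and uniform in $\Gg$ (since the $\alpha_i$ are independent), so the challenge ciphertext is statistically independent of $r_{\bar{l}}$ and $\A$ wins with probability exactly $1/2$. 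Outputting $1$ iff $\A$'s guess is correct then yields a BDH advantage of at least $\epsilon/N$, where the $1/N$ factor comes from $\B$ guessing the target user upfront.

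In the \textbf{Setup}, $\B$ picks $\bar{l} \in \{1, \dots, N\}$ uniformly at random and publishes $q_{\bar{l}} = B$ (so $\mu_{\bar{l}} = b$ implicitly); for every other user $l$ it picks $\mu_l', v_l' \in \Zp$ and publishes $q_l = S^{\mu_l'}$, treating $\mu_l = s \mu_l'$ and $v_l = s v_l'$ implicitly. For each shard index $i$ it picks $\alpha_i \in \Zp$ and publishes $\varepsilon_{i,t_0} = A^{\alpha_i s_{t_0}}$, corresponding to $u_i = a \alpha_i$ implicitly. In the Commit step, $\bar{l}$'s challenge block uses the implicit randomness $k_{\bar{b}} = s$ via the encapsulated key $k_{\bar{b},1,t_n} = S^{v_{\bar{l}}/s_{t_n}}$, while for the other users' challenge blocks $k_b$ is chosen by $\B$ in the clear. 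Because $a, b, s$ are uniform and non-zero in $\Zp$, every implicit exponent has the correct marginal distribution.

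The adaptive \textbf{Phase} queries are the crux. The rule that at each time $t_j$ the adversary asks either for masking shards or for unlocked keys (but never both) lets $\B$ program the time-key $s_{t_j}$ in two complementary regimes. At masking-shard times $\B$ sets $s_{t_j} = \sigma_j$ in the clear, so $\varepsilon_{i,t_j} = A^{\alpha_i \sigma_j}$ is computable and every encapsulated key collapses to an $S$-power (the hidden $s$ in $v_l$ or $k_{\bar{b}}$ canceling with the denominator). At unlocked-key times it implicitly sets $s_{t_j} = s \tau_j$ for a fresh $\tau_j$, so every unlocked key $g^{\mu_l k_b / s_{t_j}}$ collapses to $B^{1/\tau_j}$ (for the target user) or $g^{\mu_l' k_b / \tau_j}$ (for non-target users, the $s$ in the denominator canceling with the $s$ hidden in $\mu_l$), and analogously for the encapsulated keys. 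Both regimes respect the control-shard identity $c_b = e(\varepsilon_{\bar{i},t_j}, k_{b,1,t_j})$, and since $s_{t_j}$ itself is never revealed, the adversary cannot tell the two regimes apart. In the \textbf{Challenge}, $\B$ flips coins $r_l$ and encrypts $m_{l, r_l}$ with blinding $e(A, S)^{\alpha_i k_b \mu_l'}$ on shard $i$ for $l \neq \bar{l}$, and with blinding $T^{\alpha_i}$ on shard $i$ for $l = \bar{l}$.

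The hard part will be verifying this phase-query simulation in full detail: one must check that the two adaptive programmings of $s_{t_j}$ produce jointly consistent and correctly-distributed transcripts of encapsulated keys, unlocked keys, masking shards, and control shards for all users simultaneously, and that the transitions between the two regimes — visible to the adversary through the common encapsulated-key updates — do not leak information about the implicit structure of $s$. Once the embedding of $s$ into the non-target users' secret exponents $\mu_l, v_l$ is fixed so that every required cancellation goes through, the rest of the argument is essentially routine bookkeeping.
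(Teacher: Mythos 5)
Your overall blueprint matches the paper's: embed the BDH challenge into the implicit exponents, exploit the either/or restriction of Phase 1 to program $s_{t_j}$ in two regimes (in the clear at masking-shard times, as a hidden multiple of a challenge exponent at unlocked-key times), and blind the challenge ciphertext with powers of $T$. The two-regime bookkeeping you sketch does go through. However, there is a genuine gap in how you embed $T$: you guess a single target user $\bar{l}$ up front, blind only that user's ciphertext with $T^{\alpha_i}$, and leave all other users' challenge ciphertexts as genuine encryptions. The game lets the \emph{adversary} choose which user to attack in the Guess phase, so when your guess is wrong (probability $(N-1)/N$) the real-$T$ and random-$T$ experiments are \emph{not} identically distributed from $\A$'s view (the unattacked target's ciphertext still depends on $T$), yet the attacked user's ciphertext is well-formed in both. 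An adversary that, say, attacks whichever user's ciphertext "looks well-formed" can keep success probability $\tfrac{1}{2}+\epsilon$ in the random-$T$ case as well, driving your simulator's BDH advantage to $0$. Your claimed bound $\epsilon/N$ only follows under an unstated assumption that a wrong guess leaves $\A$'s behaviour unchanged; to make the guessing approach sound you would need a hybrid argument over users. The paper sidesteps this entirely by embedding the challenge into \emph{every} user and \emph{every} shard at once --- $\mu_l := \mu_l' b$, $k_l := k_l' a$, $u_i := u_i' c$ for all $l$ and $i$ --- so that every challenge blinding equals $T^{u_i' k_l' \mu_l'}$, no target needs to be guessed, and the advantage is $\epsilon/2$ regardless of which $\bar{l}$ the adversary picks.

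A secondary point: your justification that, for random $T$, the pads $T^{\alpha_i}$ are "independent and uniform in $\Gg$ since the $\alpha_i$ are independent" is false --- all of them are deterministic functions of the single group element $T$ and the ratios $\alpha_i/\alpha_j$, which are information-theoretically fixed by the published shards $\varepsilon_{i,t_0}$. The paper's own proof makes the same leap (asserting the messages are "completely hidden") without this explicit mis-statement, so the conclusion you need is the one the paper also relies on, but you should not claim independence. Finally, watch the edge case where the Commit time $t_n$ falls in an unlocked-key regime: the target's encryption token $k_{\bar{l},0,t_n}=g^{b/(s\tau_n)}$ would then be uncomputable from $B$ and $S$ alone, so you must force the commit (and the challenge encryption, which consumes $\varepsilon_{i,t_j}$) to occur at clear-$s_{t_j}$ times.
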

\begin{proof}
  Suppose there exists a polynomial-time adversary $\A$, that can attack the scheme in the Selective-Set model with advantage $\epsilon$.
  Then a simulator $\B$ can be built that can play the Decisional BDH game with advantage $\epsilon/2$.
  The simulation proceeds as follows.

  \paragraph*{Init}
    The adversary chooses the number of users $N$ and the maximum number of masking shards $I$, the simulator takes in a BDH challenge $g_1, g_2$, $A=g_1^\alpha$, $B=g_2^\beta, C=g_2^\gamma, T$.

  \paragraph*{Setup}
  The simulator chooses random $\mu_l', \omega_l, \mu_l \in \Zp$ for $1 \leq l \leq N$, $u_i' \in \Zp$ for $1 \leq i \leq I$.
  Then it implicitly sets:
  \begin{equation}
    \mu_l := \mu_l' \beta \qquad v_l:=\frac{\omega_l \beta}{\gamma} \qquad 1 \leq l \leq N, \qquad u_i:= u_i'\alpha \qquad 1 \leq i \leq I.
  \end{equation}
  So it publishes the public keys of the users and the initial masking shards:
  \begin{align}
    q_l&= B^{\mu_l'} \qquad 1 \leq l \leq N, \qquad &\varepsilon_{i, t_0} &= A^{u_i' s_{t_0}} \qquad 1 \leq i \leq I\\
    &=g_2^{\beta \mu_l'} & &= g_1^{\alpha u_i' s_{t_0}} \nonumber\\
    &=g_2^{\mu_l} & &= g_1^{u_i s_{t_0}}. \nonumber
  \end{align}

  \paragraph*{Phase 0}
  In this phase the simulator answers to update queries of the masking shards.
  For each time $0 \leq j \leq n$ it chooses uniformly at random $s_{t_j} \in \Zp$ and computes:
  \begin{align}
    \varepsilon_{i, t_j} &= A^{u_i' s_{t_j}} \qquad 1 \leq i \leq I \\
    &=g_1^{\alpha u_i' s_{t_j}} \nonumber \\
    &=g_1^{u_i s_{t_j}}. \nonumber
  \end{align}

  \paragraph*{Commit}
  In this phase users commit to a key before creating a ciphertext, by publishing their encryption tokens and encapsulated keys.
  Note that each user may commit at a different time, but for simplicity we suppose that they commit all together.
  Moreover from now on the indexes $b$ and $l$ will be identified, since for the purposes of this proof only one encryption per user is considered.
  
  To simulate the commitment $\B$ chooses uniformly at random $k_l' \in \Zp$ and implicitly sets $k_l:= k_l'\gamma$ for $1 \leq l \leq N$.
  Furthermore it chooses $s_{t_n} \in \Zp$, then it can compute:
  \begin{align}
    k_{l, 0, t_n} &= B^{\frac{\mu_l'}{s_{t_n}}},  &k_{b, 1, t_n} &= B^{\frac{\omega_l k_l'}{s_{t_n}}} \qquad 1 \leq l \leq N \\
    &=g_2^{\frac{\beta \mu_l'}{s_{t_n}}} & &=g_2^{\frac{\beta \omega_l \gamma k_l'}{\gamma s_{t_n}}} \nonumber \\
    &=g_2^{\frac{\mu_l}{s_{t_n}}} & &=g_2^{\frac{v_l k_l}{s_{t_n}}}. \nonumber
  \end{align}

  \paragraph*{Phase 1}
  In this phase the adversary for each time $n \leq j < n^*$ may request updates of the encapsulated keys $k_{b, 1, t_j}$, and either  the corresponding unlocked keys $k_{b, 2, t_j}$, or the masking shards $\varepsilon_{i, t_j}$, but not both.
  If the adversary asks for the unlocked keys the simulator chooses at random $s_{t_j}' \in \Zp$ and implicitly sets $s_{t_j} := s_{t_j}' \beta$.
  So it can compute:
  \begin{align}
    k_{b, 1, t_j} &= g_2^{\frac{\omega_l k_l'}{s_{t_j}'}}, & k_{b, 2, t_j} &= C^{\frac{k_l'\mu_l'}{s_{t_j}'}} \qquad 1 \leq l \leq N \\
    &= g_2^{\frac{\beta \omega_l \gamma k_l'}{\gamma \beta s_{t_j}'}} & &= g_2^{\frac{\gamma k_l' \beta \mu_l'}{ \beta s_{t_j}'}} \nonumber \\
    &= g_2^{\frac{v_l k_l}{s_{t_j}}} & &= g_2^{\frac{k_l \mu_l}{s_{t_j}}}. \nonumber
  \end{align}
  Otherwise, if the adversary asks for the masking shards, it chooses $s_{t_j} \in \Zp$ and computes
  \begin{align}
    k_{b, 1, t_j} &= B^{\frac{\omega_l k_l'}{s_{t_j}}} \qquad 1 \leq l \leq N, & \varepsilon_{i, t_j} &= A^{u_i' s_{t_j}} \qquad 1 \leq i \leq I \\
    &=g_2^{\frac{\beta \omega_l \gamma k_l'}{\gamma s_{t_j}}} & &=g_1^{\alpha u_i' s_{t_j}} \nonumber \\
    &=g_2^{\frac{v_l k_l}{s_{t_j}}} & &=g_1^{u_i s_{t_j}} \nonumber
  \end{align}

  \paragraph*{Challenge}
  For each $1\leq l \leq N$ the adversary sends two messages $m_{l,0}, m_{l, 1}$of length $I_l \delta$.
  The simulator flips $N$ random coins $r_l \in \{0, 1\}$ then creates the ciphertexts as:
  \begin{align}
    c_{b, i} &:= m_{l, r_l, i} \oplus \phi\left(T^{u_i'k_l'\mu_l'}\right)\\
    &\stackrel{*}{=} m_{l, r_l, i} \oplus \phi\left(\egg^{\gamma k_l' \beta \mu_l' \alpha u_i'}\right) \nonumber\\
    &= m_{l, r_l, i} \oplus \phi\left(\egg^{k_l \mu_l u_i}\right)  & 1 \leq i \leq I_l, \quad 1 \leq l \leq N \nonumber
  \end{align}
  where the equality $\stackrel{*}{=}$ holds if and only if the BDH challenge was a valid tuple (i.e. $T$ is non-random).

  \paragraph*{Phase 2}~
    During this phase the simulator acts exactly as in \emph{Phase~1}.

  \paragraph*{Guess}
      The adversary will eventually select a user $\bar{l}$ and output a guess $r_{\bar{l}}'$ of $r_{\bar{l}}$.
      The simulator then outputs $0$ to guess that $T = \egg^{\alpha \beta \gamma}$ if $r_{\bar{l}}' = r_{\bar{l}}$; otherwise, it outputs $1$ to indicate that it believes $T$ is a random group element in $\Gt$.
      In fact when $T$ is not random the simulator $\mathcal{B}$ gives a perfect simulation so it holds:
      $$
        Pr\left[\mathcal{B}\left(\vec{y},T=\egg^{\alpha \beta \gamma}\right)=0\right] = \frac{1}{2} + \epsilon
      $$
      On the contrary when $T$ is a random element $R \in \Gt$, since $\phi$ is a uniform mapping then the messages $m_{r_l}$ are completely hidden from the adversary point of view, so:
      $$
        Pr\left[\mathcal{B}\left(\vec{y},T=R\right)=0\right] = \frac{1}{2}
      $$
      Therefore, $\mathcal{B}$ can play the decisional BDH game with non-negligible advantage$~\frac{\epsilon}{2}$.
\end{proof}

\subsection{Security Against the File Keeper}
  In this section we describe the \emph{end to end} privacy of the protocol testing its robustness in scenarios where the File Keeper itself tries to read the content of the encrypted data stored on the ledger.
  The security of the protocol will again be proven in terms of chosen-plaintext indistinguishability, but there is a distinction between two scenarios.
  
  In the first one the security will be proven using the standard Decisional Bilinear Diffie Hellman Assumption, but we assume that the File Keeper is not malicious and that it takes over the role after shards initialization.
  That is the protocol is initialized and then the relevant information is passed to the File Keeper that subsequently fulfils its role following the protocol (but trying to decrypt the files).

  In the second scenario the File Keeper independently sets up the masking shards and freely interacts with the users, but  in this case an interactive assumption (Interactive Decisional Diffie-Hellman or IDDH) defined in~\ref{IDDH}
   is necessary to prove the security.

  The security game of the first scenario is formally defined as follows.\\

\begin{definition}[Curious File Keeper Security Game]
  The security game for the protocol with a curious File Keeper proceeds as follows:
  \begin{description}
    \normalfont
    \item [Init] The adversary $\A$ chooses a number of users $N$ that will encrypt files and the maximum number of masking shards $I$.
    \item [Setup] For each user $U_l$, with $1\leq l \leq N$ the challenger $\C$ sets up a public key $q_l$, and initializes the masking shards $\varepsilon_{i, t_0}$, for $1 \leq i \leq I$, giving also $s_{t_0}$ to $\A$.
    \item [Commit] $\A$ asks the users commit to a key before creating a ciphertext giving them encryption tokens $k_{l, 0, t_j}$, $\C$ responds publishing encapsulated keys $k_{b, 1, t_j}$.
    \item [Challenge] Let $\delta$ be the size of the uniform mapping $\phi$ of $\Gt$.
    For each ${1\leq l \leq N}$ the adversary chooses two messages $m_{l,0}, m_{l, 1}$ of length $I_l \delta$ and sends them to the challenger that flips a random coin $r_l \in \{0, 1\}$ and publishes the encryption $(c_{b, i}, 1 \leq i \leq I_l)$ of the message $m_{l, r_l}$.
    \item [Guess] The adversary chooses an $\bar{l}$ such that $1\leq \bar{l} \leq N$ and outputs a guess $r_{\bar{l}}'$ of $r_{\bar{l}}$.
  \end{description}
\end{definition}

\begin{definition}[Security with a Curious File Keeper]
  An Updating Masking protocol with security parameter $\xi$ is CPA secure with a Curious File Keeper if for all probabilistic polynomial-time adversaries $\A$, there exists a negligible function $\psi$ such that:
  \begin{equation}
    Pr[r_{\bar{l}}' = r_{\bar{l}}] \leq \frac{1}{2} + \psi(\xi)
  \end{equation}
\end{definition}

The security is provided by the following theorem.

\begin{theorem}\label{pledfk1}
  If an adversary can break the scheme, then a simulator can be constructed to play the decisional BDH game with non-negligible advantage.
\end{theorem}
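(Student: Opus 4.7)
The plan is to adapt the simulator from the proof of \Cref{pledo} to the simpler structure of this game: there are no update phases, but the adversary is now handed $s_{t_0}$ outright and supplies the encryption tokens itself. The simulator $\B$ receives a BDH instance $(g, A=g^a, B=g^b, C=g^c, T)$ and embeds $a$, $b$, $c$ into the users' and system's secret exponents so that, when $T = e(g,g)^{abc}$, the challenge ciphertext is a faithful encryption of $m_{l, r_l}$, and when $T$ is uniform it is a one-time pad.

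In Setup, $\B$ picks random $\mu_l', \omega_l, u_i', s_{t_0} \in \Zp$ and implicitly sets $\mu_l := \mu_l' b$, $v_l := \omega_l b / a$, $u_i := u_i' c$. It then publishes $q_l = B^{\mu_l'}$, $\varepsilon_{i, t_0} = C^{u_i' s_{t_0}}$, and hands $s_{t_0}$ to $\A$ as the game prescribes. In Commit, when $\A$ delivers an honestly computed encryption token $k_{l, 0, t_j} = q_l^{1/s_{t_j}}$ for some $s_{t_j}$ of its choice, $\B$ picks a fresh $k_l' \in \Zp$, implicitly sets $k_l := k_l' a$, and answers with
\begin{equation*}
k_{b, 1, t_j} := (k_{l, 0, t_j})^{\omega_l k_l' / \mu_l'} .
\end{equation*}
Substituting the honest form of $k_{l, 0, t_j}$ gives $g^{\omega_l b k_l' / s_{t_j}} = g^{v_l k_l / s_{t_j}}$, matching the real protocol; crucially, the exponent $\omega_l k_l' / \mu_l'$ is known to $\B$ even though none of $a$, $b$, $v_l$, $k_l$, $\mu_l$, $s_{t_j}$ is.

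For the Challenge, $\B$ flips $N$ coins $r_l \in \{0, 1\}$ and outputs $c_{b, i} := m_{l, r_l, i} \oplus T^{u_i' \mu_l' k_l'}$ for $1 \leq i \leq I_l$. Since $u_i k_l \mu_l = u_i' \mu_l' k_l' \cdot abc$, a valid $T$ produces exactly $m_{l, r_l, i} \oplus e(g,g)^{u_i k_l \mu_l}$, while a uniform $T \in \Gg$ makes each masked shard statistically independent of $r_l$. In Guess, $\B$ outputs $0$ if $r_{\bar l}' = r_{\bar l}$ and $1$ otherwise; the standard two-case computation then yields BDH-advantage $\epsilon / 2$.

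The main obstacle, as in \Cref{pledo}, is choosing the embedding coherently. Placing $b$ inside $\mu_l$ and $c$ inside $u_i$ makes the BDH target $abc$ appear in the encryption exponent once $k_l$ absorbs $a$, but then the encapsulated-key formula $(k_{l, 0, t_j})^{v_l k_l / \mu_l}$ would require $\B$ to know a quantity with a stray factor of $1/b$. Setting $v_l := \omega_l b / a$ cancels both the unknown $b$ in $\mu_l$ and the unknown $a$ in $k_l$, leaving the computable exponent $\omega_l k_l' / \mu_l'$. Once this algebraic balance is pinned down, verifying that the simulated values are distributed identically to the real game (by the bijectivity of the implicit substitutions on the uniform randomness $\mu_l', \omega_l, u_i', k_l'$) and that the BDH advantage transfers through is routine.
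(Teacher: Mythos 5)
Your proposal is correct and follows essentially the same route as the paper's own proof: the identical implicit embedding $\mu_l := \mu_l' b$, $v_l := \omega_l b/a$, $u_i := u_i' c$, $k_l := k_l' a$, the same computable exponent $\omega_l k_l'/\mu_l'$ for the encapsulated key, the same challenge ciphertext $m_{l,r_l,i}\oplus T^{u_i' k_l' \mu_l'}$, and the same $\epsilon/2$ advantage analysis. Your closing remarks on why the $b/a$ factor in $v_l$ is the right cancellation are a helpful gloss but do not change the argument.
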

\begin{proof}
  Suppose there exists a polynomial-time adversary $\A$, that can attack the scheme in the Selective-Set model with advantage $\epsilon$.
  Then a simulator $\B$ can be built that can play the Decisional BDH game with advantage $\epsilon/2$.
  The simulation proceeds as follows.

  \paragraph*{Init}
    The adversary chooses the number of users $N$ and the maximum number of masking shards $I$, the simulator takes in a DBDH challenge $g_1, g_2$, ${A=g_1^\alpha}, {B=g_2^\beta}$, ${C=g_2^\gamma}, T$.

  \paragraph*{Setup}
  The simulator chooses random $\mu_l', \omega_l, \mu_l \in \Zp$ for $1 \leq l \leq N$, $u_i' \in \Zp$ for $1 \leq i \leq I$, $s_{t_0} \in \Zp$.
  Then it implicitly sets:
  \begin{equation}
    \mu_l := \mu_l' \beta \qquad v_l:=\frac{\omega_l \beta}{\gamma} \qquad 1 \leq l \leq N, \qquad u_i:= u_i'\alpha \qquad 1 \leq i \leq I.
  \end{equation}
  So it publishes the public keys of the users and the initial masking shards:
  \begin{align}
    q_l&= B^{\mu_l'} \qquad 1 \leq l \leq N, \qquad &\varepsilon_{i, t_0} &= A^{u_i' s_{t_0}} \qquad 1 \leq i \leq I\\
    &=g_2^{\beta \mu_l'} & &= g_1^{\alpha u_i' s_{t_0}} \nonumber\\
    &=g_2^{\mu_l} & &= g_1^{u_i s_{t_0}} .\nonumber
  \end{align}
  Moreover the exponent $s_{t_0}$ is given to the adversary.

  \paragraph*{Commit}
  In this phase the adversary asks the users to commit to a key before creating a ciphertext, by giving them encryption tokens and obtaining encapsulated keys in return.
  For every user $l$ the adversary may choose a different time $t_{j_l}$ in which the commitment takes place.
  To formulate the query $\A$ chooses a random exponent $s_{t_{j_l}}$ and computes the encryption token:
  \begin{align}
  k_{l, 0, t_{j_{l}}} &= q_l^{\frac{1}{s_{t_{j_{l}}}}}\\
    &=B^{\frac{\mu_l'}{s_{t_{j_{l}}}}} \nonumber \\
    &=g_2^{\frac{\beta \mu_l'}{s_{t_{j_{l}}}}} \nonumber \\
    &=g_2^{\frac{\mu_l}{s_{t_{j_{l}}}}}.\nonumber
  \end{align}
  To simulate the answer $\B$ chooses uniformly at random $k_l' \in \Zp$ and implicitly sets $k_l:= k_l'\gamma$ for $1 \leq l \leq N$.
  Then it can compute:
  \begin{align}
    k_{b, 1, t_{j_{l}}} &= (k_{l, 0, t_{j_{l}}})^{\frac{\omega_l k_l'}{\mu_l'}} \qquad 1 \leq l \leq N \\
    &=g_2^{\frac{\beta \mu_l'}{s_{t_{j_{l}}}} \frac{\omega_l k_l' \gamma}{\mu_l' \gamma}} \nonumber \\
    &=g_2^{\frac{\beta}{\gamma}\omega_l \frac{\gamma k_l'}{s_{t_{j_{l}}}}} \nonumber \\
    &=g_2^{\frac{v_l k_l}{s_{t_{j_{l}}}}}. \nonumber
  \end{align}

  \paragraph*{Challenge}
  For each $1\leq l \leq N$ the adversary sends two messages $m_{l,0}, m_{l, 1}$of length $I_l \delta$.
  The simulator flips $N$ random coins $r_l \in \{0, 1\}$ then creates the ciphertexts as:
  \begin{align}
    c_{b, i} &:= m_{l, r_l, i} \oplus \phi\left(T^{u_i'k_l'\mu_l'}\right)\\
    &\stackrel{*}{=} m_{l, r_l, i} \oplus \phi\left(\egg^{\gamma k_l' \beta \mu_l' \alpha u_i'}\right) \nonumber\\
    &= m_{l, r_l, i} \oplus \phi\left(\egg^{k_l \mu_l u_i}\right)  & 1 \leq i \leq I_l, \quad 1 \leq l \leq N \nonumber.
  \end{align}
  Where the equality $\stackrel{*}{=}$ holds if and only if the BDH challenge was a valid tuple (i.e. $T$ is non-random).

  \paragraph*{Guess}
      The adversary will eventually select a user $\bar{l}$ and output a guess $r_{\bar{l}}'$ of $r_{\bar{l}}$.
      The simulator then outputs $0$ to guess that $T = \egg^{\alpha \beta \gamma}$ if $r_{\bar{l}}' = r_{\bar{l}}$; otherwise, it outputs $1$ to indicate that it believes $T$ is a random group element in $\Gg$.
      In fact when $T$ is not random the simulator $\mathcal{B}$ gives a perfect simulation so it holds:
      $$
        Pr\left[\mathcal{B}\left(\vec{y},T=\egg^{\alpha \beta \gamma}\right)=0\right] = \frac{1}{2} + \epsilon
      $$
      On the contrary when $T$ is a random element $R \in \Gt$, since $\phi$ is a uniform mapping then the messages $m_{r_l}$ are completely hidden from the adversary point of view, so:
      $$
        Pr\left[\mathcal{B}\left(\vec{y},T=R\right)=0\right] = \frac{1}{2}
      $$
      Therefore, $\mathcal{B}$ can play the decisional BDH game with non-negligible advantage$~\frac{\epsilon}{2}$.
\end{proof}

During the simulation in this proof the update of the masking shards and encapsulated keys has not been explicitly considered because the adversary has the role of the File Keeper, that is the party in charge of such operations, so the simulator is not involved.

Note also that for the encryption in the challenge phase the simulator does not use neither the masking shards nor the encapsulated key, so the update of these elements is not strictly necessary (although the simulator should request them even without using them just for the sake of a thorough simulation).
This is possible only because the simulator controls the initialization of the masking shards, and this limits the possibilities for the attacker (e.g. choosing particular values for $s_t$).

To consider a more powerful adversary and take account of possible interaction the security game can be modified in this way:
\begin{itemize}
  \item in the setup phase the simulator does not initialize the masking shards, the adversary has complete control over them
  \item in the challenge phase the simulator asks for the updated version of the masking shards and the encapsulated keys before computing the encryption, moreover the adversary may ask that the encryptions take place at different times.
\end{itemize}

Now the security against this more powerful File Keeper can be given with the following theorem.

\begin{theorem}\label{pledfk2}
  If an adversary taking the role of the file keeper can break the scheme, then a simulator can be constructed to play the IDDH game in $\Gg$ with non-negligible advantage.
\end{theorem}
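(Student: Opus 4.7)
The strategy mirrors the proof of \Cref{pledfk1} but replaces the static BDH tuple with an interactive IDDH instance, exploiting the fact that the IDDH game lets the simulator $\B$ choose its query $S$ after observing the File Keeper's publications. On input $A = g^a, B = g^b$, $\B$ picks random $\mu_l', \omega_l, k_l' \in \Zp$ for each user, publishes $q_l = B^{\mu_l'}$, and implicitly sets $\mu_l := \mu_l' b$, $v_l := \omega_l b / a$, $k_l := k_l' a$. Since $\A$ (playing the File Keeper) initializes and updates the masking shards on its own, $\B$ never has to produce $\varepsilon_{i, t_j}$ nor know the time-keys $s_{t_j}$ or the exponents $u_i$.

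During the Commit phase, for each encryption token $k_{l, 0, t_{j_l}} = q_l^{1/s_{t_{j_l}}}$ sent by $\A$, the simulator answers with
\[
k_{b, 1, t_{j_l}} := (k_{l, 0, t_{j_l}})^{\omega_l k_l' / \mu_l'},
\]
which expands to $g^{\omega_l k_l' b / s_{t_{j_l}}} = g^{v_l k_l / s_{t_{j_l}}}$; the unknown $a$ and $b$ cancel and the response is correctly distributed. No interaction with the IDDH challenger is required yet.

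The Challenge phase is where IDDH is invoked. After a standard guess of the target user $\bar{l}$ (costing at most a factor $1/N$) and of its challenge time $t^*$, $\B$ waits for $\A$ to publish the updated token $k_{\bar{l}, 0, t^*} = g^{b \mu_{\bar{l}}' / s_{t^*}}$ and extracts
\[
S := (k_{\bar{l}, 0, t^*})^{1/\mu_{\bar{l}}'} = g^{b/s_{t^*}},
\]
which it forwards to the IDDH challenger. Upon receiving $Z$, the challenge ciphertext is $c_{b, i} := m_{\bar{l}, r_{\bar{l}}, i} \oplus e(\varepsilon_{i, t^*}, Z^{k_{\bar{l}}' \mu_{\bar{l}}'})$. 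When $Z = g^{ab/s_{t^*}}$, the mask becomes $e(g,g)^{u_i\, ab\, k_{\bar{l}}' \mu_{\bar{l}}'} = e(g,g)^{u_i k_{\bar{l}} \mu_{\bar{l}}}$, a correctly formed ciphertext; when $Z$ is random the mask is uniform in $\Gg$ and the messages are perfectly hidden. As in \Cref{pledo} and \Cref{pledfk1}, $\B$ outputs $0$ whenever $\A$ guesses $r_{\bar{l}}$ correctly and $1$ otherwise, inheriting advantage proportional to $\epsilon/2$ in the IDDH game (with a polynomial loss from the guessing step).

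The principal obstacle is that IDDH permits only one query $S$, which must coincide with the time-key $s_{t^*}$ used in the target ciphertext. The enabling observation is that $\B$ need not know $s_{t^*}$ at all: the identity $q_l = B^{\mu_l'}$ lets it transform the adversary's own encryption token into $g^{b/s_{t^*}}$ via the exponent $1/\mu_{\bar{l}}'$ it chose itself. This is precisely the kind of adaptivity that a static BDH challenge cannot provide, since there the pre-committed exponents $a, b, c$ cannot be aligned with an adversary-chosen time-key; hence the need for the stronger interactive assumption. Accommodating encryptions at multiple distinct challenge times across different users is absorbed into the preliminary guess of $\bar{l}$, or more tightly into a standard hybrid argument over user indices where each hybrid step reduces to an independent IDDH instance.
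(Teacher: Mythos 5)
Your reduction is, in outline, the one the paper uses: embed $b$ in the public keys via $q_l = B^{\mu_l'}$, set $v_l := \omega_l b/a$ and $k_l := k_l' a$ so that commit queries are answered by pure cancellation without knowing $a$ or $b$, and at challenge time manufacture the IDDH query $S = B^{1/s_{t^*}}$ from an element published by the adversary, using the challenger's reply $Z$ to build the mask $e(\varepsilon_{i,t^*}, Z^{k_l'\mu_l'})$. The case analysis ($Z = g^{ab/s_{t^*}}$ yields a perfect simulation, random $Z$ hides the message) and the resulting advantage $\epsilon/2$ also match.

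The one step that does not go through as written is the source of $S$. You extract it from ``the updated token $k_{\bar l,0,t^*}$,'' but the protocol never updates encryption tokens: a token $k_{l,0,t_j}$ is issued once, when the user requests to publish, and the only objects the file keeper keeps current are the masking shards and the encapsulated keys. Accordingly, the modified game preceding Theorem~\ref{pledfk2} has the simulator request, before encrypting, the updated masking shards and the updated encapsulated keys --- not a fresh token --- and the paper's own challenge phase explicitly treats the encryption time $t^*$ as later than the commit. So the element $g^{b\mu_{\bar l}'/s_{t^*}}$ that you want to raise to $1/\mu_{\bar l}'$ is not guaranteed to exist; what is guaranteed is $k_{b,1,t^*} = g^{\omega_l k_l' b/s_{t^*}}$ (when the adversary follows the protocol), and the paper forms $S = (k_{b,1,t^*})^{1/(k_l'\omega_l)} = B^{1/s_{t^*}}$ from that instead. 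This is a small repair, and in return your explicit treatment of several distinct challenge times, via a guess of the target user or a hybrid over user indices, is more careful than the paper, which handles a single challenge time and leaves the general case implicit. One further point worth stating: $\B$ cannot verify that the adversary's published element really has the claimed discrete logarithm, so you should add (as the paper does) that even when the adversary deviates the query $S$ is still a well-formed group element, the IDDH interaction remains valid, and the view of $\A$ is distributed as in the real protocol.
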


\begin{proof}
  The proofs is almost identical to the proof of Theorem~\ref{pledfk1}, only the following tweaks are necessary:
  \begin{itemize}
    \item the simulator starts initializing the IDDH challenge, obtaining the elements $A=g_2^\alpha, B=g_2^\beta$;
    \item during the setup phase the simulator does not have to initialize the masking shards so it does not need the element $C$ and can therefore use $A$ in the the same way it used $C$ in the previous simulations.
    \item during the challenge phase suppose that the adversary requests at time $t_j$ the encryption of the user $l$, then the simulator asks the adversary for the value of the updated masking shards $\varepsilon_{i, t_j}$ for $1 \leq i \leq I$, and the updated encapsulated key $k_{b, 1, t_{j}}$.
    Then the simulator $\B$ interacts with the challenger $\C$ of the IDDH game sending the value:
    \begin{align}
      S &= k_{b, 1, t_{j}}^{\frac{1}{k_b'\omega_l}} \\
      &\stackrel{\bullet}{=} (g^{\frac{\beta \omega_l k_b'}{s_{t_j}}})^{\frac{1}{k_b'\omega_l}} \nonumber \\
      &= g^{\frac{\beta}{s_{t_j}}} \nonumber \\
      &= B^{\frac{1}{s_{t_j}}}. \nonumber
    \end{align}
    $\C$ answers with $Z$ and $\B$ proceeds with the encryption computing
    \begin{align}
    c_{b, i} &:= m_{l, r_l, i} \oplus \phi\left(e(\varepsilon_{i, t_j}, Z^{k_l'\mu_l'})\right)\\
    &\stackrel{\bullet}{=} m_{l, r_l, i} \oplus \phi\left(e(g_1^{u_i s_{t_j}},Z^{k_l'\mu_l'})\right)\nonumber\\
    &\stackrel{*}{=} m_{l, r_l, i} \oplus \phi\left(e\left(g_1^{u_i s_{t_j}},(g_2^{\frac{\alpha \beta}{s_{t_j}}})^{k_l'\mu_l'}\right)\right) \nonumber\\
    &= m_{l, r_l, i} \oplus \phi\left(\egg^{u_i \alpha k_l' \beta \mu_l' }\right) \nonumber \\
    &= m_{l, r_l, i} \oplus \phi\left(\egg^{u_i k_l \mu_l }\right)  & 1 \leq i \leq I_l \nonumber
  \end{align}
  Where the equality $\stackrel{*}{=}$ holds if and only if in the IDDH challenge the value of the random coin tossed by $\C$ is $r=0$.
  Note that the equalities $\stackrel{\bullet}{=}$ hold if and only if the adversary followed the protocol acting as the File Keeper, however the interaction with $\C$ is valid even if this is not the case, and from the prospective of $\A$ the simulation has the same distribution of an interaction with the real protocol.
  \end{itemize}
  The rest of the simulation is identical and the same considerations hold, thus $\B$ has an advantage of $\frac{\epsilon}{2}$ playing the IDDH.
\end{proof}

\subsection{Security against other Users}
  In the previous sections the robustness of the protocol against outside attackers and even against the file keeper has been proven.
  To complete the analysis we present now some considerations about the remaining party that participates in the protocol: the users.

  Consider a setting in which the attacker interacts with the protocol as a normal user (thus requests encryption tokens, publishes encapsulated keys and ciphertexts), but then tries to distinguish the encryption of plaintexts of their choice performed by other users.
  Starting from the security game of Definition~\ref{pledsgo}, phases 1 and 2 can be modified removing unlocked key queries but adding queries for encryption tokens given a public key.
  Once the adversary has given the challenger an encapsulated key, the update of these keys may be requested, just like any other key.
  In this way we model a (possibly malicious) user, that interacts with the file keeper and can observe the chain and its evolution.
  
  This model leads to the same conclusion on the security of the system.
  Since the system is secure against the file keeper, and a user knows less than the file keeper, then the security holds also against other users.

  The proof of the security in this scenario follows directly from the proof of Theorem~\ref{pledo}, in fact without the need of simulating unlocked keys the simulator can always choose $s_t$ freely, and thus can follow slavishly the protocol when the adversary requests encryption tokens and updates of the related keys.


\section{Conclusion and comments}
The protocol presented here expands the scope of distributed ledgers, and in particular blockchain-like designs, to include the safe storage of sensitive data.

The approach used to achieve the one-time access property aims at a highly efficient revocation.
That is, the concept of masking shards is used to revoke the access to every ciphertext updating only common elements.
This means that it is not necessary to update every block, but only the shards and the encapsulated keys, that are very much shorter than the actual data, thus achieving great efficiency gains in comparison with re-encryption approaches.
Furthermore this allows to check the integrity of the ciphertexts even before decryption, and any observer can monitor the integrity of the ledger checking the coherence of the hash digests in the static chain and verifying the control shards against the data contained in the updating section.
\newline

As outlined in~\Cref{intuition} the one-time access property also relies in the impossibility for a service provider $P$ to save a key smaller than the actual data that guarantees access beyond revocation, and we noted how in practice data may be compressed, while cryptographic keys may not.
In light of these observation one may improve the efficiency of the protocol expanding the length of the pads so that fewer shards (and therefore fewer pairing computations) are required for encryption and decryption, in a sense compensating for the different compressibility of data and keys.
When customising these parameter one shall consider that the key material necessary for decryption has (almost exactly) the same length of the plaintext when the size $\delta$ of the uniform mapping is the same as the length of an optimal encoding of an element of $\G$, since the masking shards are elements of this group. 

Given the proofs of security against a curious file keeper, it follows that this role is only busy updating shards and encapsulated keys, but it is not depositary of trust in a privacy sense.
To further reduce the dependence on the file keeper, the protocol can be modified in order to employ temporary file keepers, that are only responsible to perform one update.
The passage of responsibilities can be done in multiple ways, e.g. choosing a random candidate in a given set, or voting the preferred successor.
In any case the  current $F$ chooses a random $s_t$, while the exponent $s_{t-1}$ is obtained from the previous file keeper in a safe and secure way; once the update has been completed and a successor has been nominated $F$ passes on $s_t$.
Note that this method enforces the oblivion of old exponents, preventing previous file keepers to reverse a revocation.
\newline

A final remark regards the frequency of revocation.
As presented here the protocol revokes every ciphertext at once, and while this might be convenient in terms of revocation efficiency, not every application suits this approach, in particular when it is not feasible to burden the user with frequent unlocking of encapsulated keys to restore access to revoked files.
An easy solution is to employ different sets of encryption shards, and divide the updating section of the ledger according to different frequencies of revocation.
For example a practical ledger could have a set of shards updated with medium frequency, suitable for most of the regular files, a set reserved for quick revocation of very sensitive files, and finally a set for long term accesses, that is updated only in case of necessity.
\newline

In conclusion the protocol presented here demonstrates a practical construction of a scheme that allows tightly-managed  sharing of sensitive information with one-time-access property and fast revocation, combined with full support of public auditing.
These properties makes it a perfect candidate for a cloud solution that aims to safely operate with sensitive data cooperating with multiple service providers (for example health records that have to be shared with insurance companies, medical professionals, hospitals, etc.), also considering the requirements of modern privacy legislation such as GDPR.





\bibliographystyle{splncs}
\bibliography{private-ledger}

\end{document}